\definecolor{Red}{rgb}{1,0,0}
\definecolor{Blue}{rgb}{0,0,1}
\definecolor{Olive}{rgb}{0.41,0.55,0.13}
\definecolor{Green}{rgb}{0,1,0}
\definecolor{MGreen}{rgb}{0,0.8,0}
\definecolor{DGreen}{rgb}{0,0.55,0}
\definecolor{Yellow}{rgb}{1,1,0}
\definecolor{Cyan}{rgb}{0,1,1}
\definecolor{Magenta}{rgb}{1,0,1}
\definecolor{Orange}{rgb}{1,.5,0}
\definecolor{Violet}{rgb}{.5,0,.5}
\definecolor{Purple}{rgb}{.75,0,.25}
\definecolor{Brown}{rgb}{.75,.5,.25}
\definecolor{Grey}{rgb}{.5,.5,.5}
\definecolor{Black}{rgb}{0,0,0}
\title{On the variance of internode distance under the multispecies coalescent}
\author{
	S\'ebastien Roch\footnote{Department of Mathematics, University of Wisconsin--Madison, Madison WI 53706} 
}
\date{\today}
\newtheorem{thm}{Theorem}
\newtheorem{lemma}{Lemma}
\newcommand{\tree}{\mathcal{T}}
\newcommand{\prob}{\mathbf{P}}
\newcommand{\stree}{\mathcal{S}}
\definecolor{Red}{rgb}{1,0,0}
\definecolor{Blue}{rgb}{0,0,1}
\definecolor{Grey}{rgb}{0.5,0.5,0.5}
\newcommand{\inode}[1]{\hat\delta_{\mathrm{int}}^{#1}}
\newcommand{\einode}{\bar\delta_{\mathrm{int}}}
\newcommand{\ustree}{\stree_{\mathrm{u}}}
\newcommand{\var}{\mathbf{Var}}
\newcommand{\expec}{\mathbf{E}}
\newcommand{\fcoal}{\mathcal{F}}
\begin{document}	
%\title{On the variance of internode distance under the multispecies coalescent\thanks{This work was supported by funding from the U.S. National Science Foundation DMS-1149312 (CAREER), DMS-1614242 and CCF-1740707 (TRIPODS). We thank Tandy Warnow for suggesting the problem and for helpful discussions.}}
%\author{S\'ebastien Roch\inst{1}\orcidID{0000-0002-7608-8550}}
%\authorrunning{S. Roch}
%\institute{Department of Mathematics\\ University of Wisconsin--Madison\\Madison, WI 53706\\
%\email{roch@math.wisc.edu}}
	
\maketitle

\begin{abstract}
We consider the problem of estimating species trees from unrooted gene tree
topologies in the presence of incomplete lineage sorting,
a common phenomenon that creates gene tree heterogeneity
in multilocus datasets.
One popular class of reconstruction methods in this setting 
is based
on internode distances, i.e.~the average graph distance between pairs of species across gene trees. While statistical
consistency in the limit of large numbers of loci has been
established in some cases, little is known about the sample complexity of such methods. Here we make progress 
on this question by deriving a lower bound on the worst-case
variance of internode distance which depends linearly on
the corresponding graph distance in the species tree. We also discuss
some algorithmic implications.

%\keywords{Species tree estimation  \and incomplete lineage sorting \and internode distance \and sample complexity \and ASTRID \and NJst.}
%\smallskip\noindent\textbf{Keywords:} Species tree estimation, incomplete lineage sorting, internode distance, sample complexity, ASTRID, NJst. 
\end{abstract}

\section{Introduction}

Species tree estimation is increasingly based on large numbers
of loci or genes across many species. Gene tree heterogeneity,
i.e.~the fact that different genomic regions may be consistent with incongruent
genealogical histories, is a common phenomenon in 
multilocus datasets that leads to significant challenges in this type of estimation. One important source of incongruence is
incomplete lineage sorting (ILS), a population-genetic effect
(see Figure~\ref{fig:ils} below for an illustration), which is
modeled mathematically using the multispecies coalescent (MSC)
process~\cite{Maddison1997a,Rannala2003}.  
Many recent phylogenetic analyses of genome-scale biological datasets have indeed revealed substantial heterogeneity consistent with ILS~\cite{jarvis-2014b,wickett2014,Cannon2016}.

Standard
methods for species tree estimation
that do not take this heterogeneity into account,
e.g.~the concatenation of genes followed by a 
single-tree maximum 
likelihood analysis, have been shown to suffer serious
drawbacks under the MSC~\cite{RochSteel2015,RoNuWa:18}.
On the other hand, new methods have been developed for species tree estimation
that specifically address gene tree heterogeneity.
One popular class of methods, often referred to as summary methods,
proceed in two steps: first reconstruct a gene tree for each locus; then infer a species tree from this collection of
gene trees. Under the MSC, many of these methods have been proven to converge to the true species tree
when the number of loci increases, i.e.~the methods are said to be statistically consistent. 
Examples of summary methods that enable statistically
consistent species tree estimation include
MP-EST~\cite{Liu2010a}, NJst~\cite{njst}, ASTRID~\cite{astrid}, ASTRAL~\cite{astral,astral-2},
STEM~\cite{stem}, STEAC~\cite{star}, STAR~\cite{star},
and
GLASS~\cite{Mossel2010}.

Here we focus on reconstruction methods, such as NJst and
ASTRID, based on what is known as internode distances,
i.e.~the average of pairwise graph distances across genes.
Beyond statistical consistency~\cite{njst,kreidl2011,AlDeRh:18}, little is known about
the data requirement or sample complexity of such methods
(unlike other methods such as ASTRAL~\cite{ShRoMi:18}
or GLASS~\cite{Mossel2010} for instance). That is, how many
genes or loci are needed to ensure that the true species tree
is inferred with high probability under the MSC?
Here we make progress 
on this question by deriving a lower bound on the worst-case
variance of internode distance. Indeed
the sample complexity of a reconstruction method depends
closely on the variance of the quantities it estimates, in this 
case internode distances.
Our bound depends linearly on
the corresponding graph distance in the species tree which, 
as we explain below, 
has possible implications for the choice of an accurate reconstruction
method.

The rest of the paper is structured as follows.
In Section~\ref{sec:basics}, we state our main results formally, after defining the MSC and the internode distance. In Section~\ref{sec:discussion}, we discuss algorithmic
implications of our bound. Proofs can be found in Section~\ref{sec:bad}.

\section{Definitions and results}
\label{sec:basics}

In this section, we first introduce the multispecies coalescent. 
We also define the internode distance and
state our results formally.

\paragraph{Multilocus evolution under the multispecies coalescent}
Our analysis is based on the multispecies coalescent (MSC),
a standard random gene tree model~\cite{Maddison1997a,Rannala2003}.
See Figure~\ref{fig:ils} for an illustration.
\begin{figure}
	\centering
	\includegraphics[width = 0.6\textwidth]{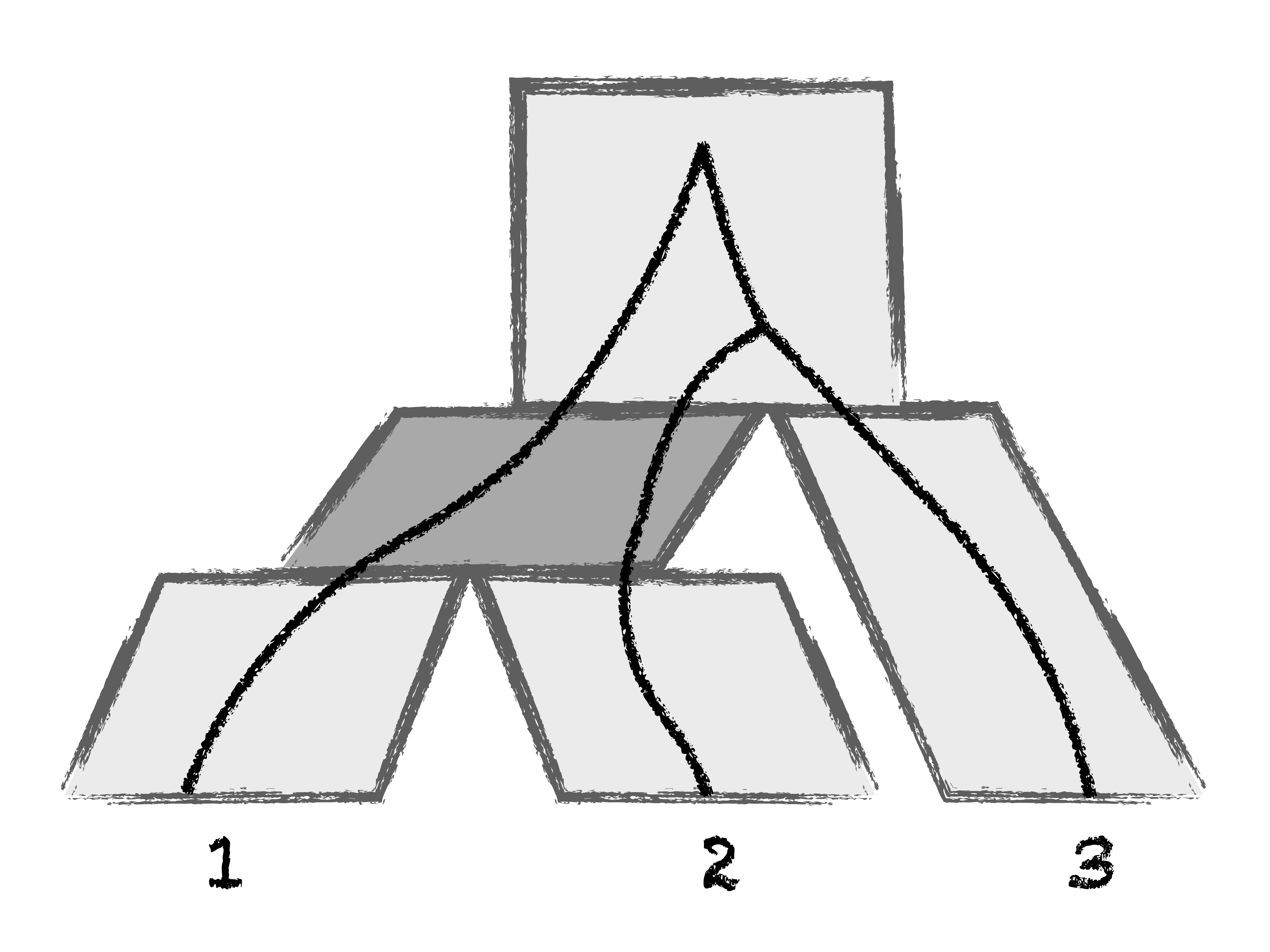}
	\caption{An incomplete lineage sorting event (in the rooted setting). Although $1$ and $2$ are more closely related in the rooted species tree (fat tree), $2$ and $3$ are more closely related in the rooted gene tree (thin tree). This incongruence is caused by the failure of the lineages originating
		from $1$ and $2$ to coalesce within the shaded branch. The shorter this branch is, the more likely incongruence occurs.}\label{fig:ils}
\end{figure}
Consider a \emph{species tree} 
$(\stree,\Gamma)$
with 
$n$ leaves. Here $\stree  = (\mathcal{V},\mathcal{E},r)$ is a rooted binary tree with vertex and edge sets $\mathcal{V}$ and $\mathcal{E}$ and where each leaf is labeled by a species in $\{1,\ldots,n\}$. We refer to $\stree$ as the \emph{species tree topology}. The \emph{branch lengths} $\Gamma = (\Gamma_e)_{e\in E}$
are expressed in so-called coalescent time units. We do not assume that $(\stree,\Gamma)$ is ultrametric (see e.g.~\cite{Steel:16}). Each gene\footnote{In keeping with much of the literature on the MSC, we use the generic term \emph{gene} to refer to any genomic region experiencing low rates of internal recombination, not necessarily a protein-coding region.} $j = 1,\ldots, m$ 
has a genealogical history 
represented by its gene tree $\tree_j$
distributed
according to the following process: 
looking backwards in time, 
on each branch $e$ of the species tree, 
the coalescence of any two lineages 
is exponentially distributed with rate $1$, 
independently from all other pairs; 
whenever two branches merge in the species tree, 
we also merge the lineages of the corresponding populations, that is, the coalescent proceeds on the union of the lineages;
one individual is sampled at each leaf. The genes are assumed to
be unlinked, i.e.~the process above is run independently
and identically for all $j = 1,\ldots,m$.
More specifically, the probability density of a realization of this model for $m$ independent genes is
\begin{align*}
&\prod_{j=1}^m \prod_{e\in E} 
\exp\left(-\binom{O_j^{e}}{2} 
\left[\gamma_j^{e, O_j^{e}+1} - \gamma_j^{e, O_j^{e}}\right]\right)\times\prod_{\ell=1}^{I_j^{e}-O_j^{e}}
\exp\left(-\binom{\ell}{2} 
\left[\gamma_j^{e, \ell} - \gamma_j^{e, \ell-1}\right]\right),
\end{align*}
where, for gene $j$ and branch $e$, 
$I_j^{e}$ is the number of lineages entering $e$,
$O_j^{e}$ is the number of lineages exiting $e$, and
$\gamma_j^{e,\ell}$ is the $\ell^{th}$ coalescence time in $e$;
for convenience, we let $\gamma_j^{e,0}$ and
$\gamma_j^{e,I_j^{e}-O_j^{e}+1}$ be respectively
the divergence times (expressed in coalescence time units) of $e$ and of its parent population (which depend on $\Gamma$). 
We write $\{\tree_j\}_j \sim \mathcal{D}_{\mathrm{s}}^{m}[\stree,\Gamma]$ to indicate that the $m$ gene trees $\{\tree_j\}_j$ are independently
distributed according to the MSC on species tree $\stree,\Gamma$.
To be specific, $\tree_j$ is the \emph{unrooted gene tree topology}---without branch lengths---and we remark that,
under the MSC, $\tree_j$ is binary with probability $1$.
Throughout we assume that the $\tree_j$'s are known
and were reconstructed \emph{without estimation error}.

\paragraph{Internode distance} Assume we are given
$m$ gene trees $\{\tree_j\}_j$ over the $n$ species 
$\{1,\ldots,n\}$. For any pair of species $x,y$ and gene $j$,
we let $d^j_\mathrm{g}(x,y)$ be the \emph{graph distance} between $x$ and $y$ on $\tree_j$, i.e.~the number
of edges on the unique path between $x$ and $y$. 
The \emph{internode distance} between $x$ and $y$
is defined as the average graph distance across genes,
i.e.
\begin{align*}
\inode{m}(x,y)
=
\frac{1}{m} \sum_{j=1}^m d^{\tree_j}_\mathrm{g}(x,y).
\end{align*}
Under the MSC, the internode distances $\inode{m}(x,y))_{x,y}$ are correlated
random variables whose joint distribution depends
in the a complex way on the species tree $(\stree,\Gamma)$.
Here follows a remarkable fact about internode distance ~\cite{njst,kreidl2011,AlDeRh:18}.
Let $\einode(x,y)$ be the expectation of $\inode{m}(x,y)$
under the MSC and let $\ustree$ be the unrooted version
of the species tree $\stree$. Then $(\einode(x,y))_{x,y}$
is an additive metric associated\footnote{Note however that the associated branch lengths may differ from
	$\Gamma$.} to $\ustree$ (see e.g.~\cite{Steel:16}). In particular, whenever $\ustree$ 
restricted to species $x,y,w,z$ has quartet topology $xy|wz$
(i.e.~the middle edge of the restriction to $x,y,w,z$ splits
$x,y$ from $w,z$), it holds that\footnote{Note that it is trivial that $(d^{\tree_j}_\mathrm{g}(x,y))_{x,y}$ is an additive metric
associated to \emph{gene tree} $\tree_j$. On the other hand it is far from trivial that averaging over the MSC leads to an additive metric associated to the \emph{species tree}.}
\begin{align*}
\einode(x,w)+\einode(y,z) 
= \einode(x,z) + \einode(y,w)
\geq \einode(x,y) + \einode(w,z).
\end{align*} 
This result forms the basis for many popular multilocus reconstruction
methods, including NJst~\cite{njst} and ASTRID~\cite{astrid},
which apply standard distance-based methods to the
internode distances 
\begin{align*}
(\inode{m}(x,y))_{x,y}.
\end{align*}

\paragraph{Main results} 
By the law of large numbers, for all
pairs of species $x,y$
\begin{align*}
\inode{m}(x,y)
\to \einode(x,y),
\end{align*}
with probability $1$ as $m \to +\infty$, a fact that can be used
to establish the statistical consistency (i.e.~the guarantee that
the true specie tree is recovered as long as $m$ is large enough) of internode distance-based methods such as NJst~\cite{njst}.
However, as far as we know, nothing is known about the sample
complexity of internode distance-based methods, i.e.~how many
genes are needed to reconstruct the species tree with high probability---say $99\%$---as a function of some structural
properties of the species tree---primarily the number of species $n$ and the shortest branch length $f$? We do not answer this important but technically difficult question here, but we make progress towards its resolution by providing a lower bound on the worst-case
variance of internode distance.
Let $d_\mathrm{g}^{\ustree}(x,y)$ denote the graph distance
between $x$ and $y$ on $\ustree$. 
\begin{thm}[Lower bound on the worst-case variance of internode distance]\label{thm:1}
There exists a constant $C > 0$ such that, 
for any integer $n \geq 4$ and real $f > 0$, 
there is a
species tree $(\stree,\Gamma)$ with $n$ leaves 
and shortest branch length $f$ such that the following holds: for all pairs of species $\ell,\ell'$ and all
integers $m \geq 1$, if $\{\tree_j\}_j \sim \mathcal{D}_{\mathrm{s}}^{m}[\stree,\Gamma]$
then
\begin{align}\label{eq:main-top}
\var\left[\inode{m}(\ell,\ell')\right] \geq C \frac{d_\mathrm{g}^{\ustree}(\ell,\ell')}{m},
\end{align}
and, furthermore,
\begin{align}\label{eq:main-max}
\max_{\ell,\ell'} \var\left[\inode{m}(\ell,\ell')\right] \geq C \frac{n}{m},
\end{align}
\end{thm}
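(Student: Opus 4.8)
The plan is to construct an explicit "bad" species tree on which the internode distance between a given pair of leaves is the sum of many nearly independent, non-degenerate increments, so that its variance scales linearly with the graph distance. Concretely, I would take $\stree$ to be a caterpillar (or a path-like tree) all of whose internal branches have length exactly $f$, so that the shortest branch length is $f$ and the graph distance $d_\mathrm{g}^{\ustree}(\ell,\ell')$ between the two extreme leaves is of order $n$. The key structural observation is that, for a single gene $j$, the graph distance $d^{\tree_j}_\mathrm{g}(\ell,\ell')$ can be decomposed along the path in the species tree: as we move backwards in time from $\ell$ toward $\ell'$, each internal branch of length $f$ contributes, via the coalescent on that branch, a local increment to the gene-tree distance that depends only on how many of the relevant lineages coalesce within that branch. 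Because the branch lengths are bounded (equal to $f$), with probability bounded away from $0$ the lineage from $\ell$ fails to coalesce across a given branch, and with probability bounded away from $0$ it does; this makes each increment a genuinely random quantity with variance bounded below by some $c(f) > 0$.

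The second ingredient is an \emph{independence (or conditional independence) across branches} argument, which is really the heart of the matter. The coalescent histories on disjoint branches of the species tree are conditionally independent given the numbers of lineages entering each branch, by the product form of the MSC density displayed above. I would make this precise by conditioning on the "lineage count profile" $(I_j^e, O_j^e)_e$ along the path, or better by exhibiting a subset of $\Theta(n)$ branches far enough apart along the path whose contributions to $d^{\tree_j}_\mathrm{g}(\ell,\ell')$ are mutually independent (for instance, contributions that depend only on whether a lineage survives a single specific branch, which is an independent Bernoulli-type event for each branch once we are in a regime where few lineages are present). Summing $\Theta(d_\mathrm{g}^{\ustree}(\ell,\ell'))$ such independent increments, each with variance at least $c(f)$, gives $\var[d^{\tree_j}_\mathrm{g}(\ell,\ell')] \geq c'(f)\, d_\mathrm{g}^{\ustree}(\ell,\ell')$ for a single gene; since the $m$ genes are i.i.d., $\var[\inode{m}(\ell,\ell')] = \frac{1}{m}\var[d^{\tree_1}_\mathrm{g}(\ell,\ell')]$, which yields \eqref{eq:main-top}. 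Inequality \eqref{eq:main-max} then follows immediately by choosing $\ell,\ell'$ to be a pair at maximal graph distance $\Theta(n)$ in the caterpillar.

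One subtlety I would want to handle carefully is that the constant $C$ must be \emph{uniform in $f$} (the theorem quantifies over all $f > 0$, including very small and very large $f$). For small $f$ the per-branch coalescence probability is small but the "no coalescence" probability is close to $1$, so the relevant increment — e.g. the indicator that the $\ell$-lineage traverses the branch without coalescing, times something — still has variance bounded below after an appropriate rescaling of the tree or an appropriate choice of which increment to track; for large $f$ the lineages coalesce quickly but then the gene-tree path still accumulates edges from the species-tree topology itself, again with fluctuations. I expect the main obstacle to be precisely this: isolating, for \emph{every} value of $f$, a clean collection of $\Theta(n)$ increments that are (conditionally) independent and each have variance bounded below by an absolute constant, rather than a constant that degenerates as $f \to 0$ or $f \to \infty$. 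I would attack it by not tracking the full gene-tree distance but only a well-chosen monotone functional of the coalescent on each branch — something like the number of "surviving ancestral lineages" or the presence/absence of a particular coalescence — whose contribution to $d^{\tree_j}_\mathrm{g}$ is additive across branches and whose law, on a length-$f$ branch with the relevant number of entering lineages, is a fixed nondegenerate distribution up to the rescaling that absorbs the $f$-dependence.

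Finally, I would verify the bookkeeping that turns "a single good pair in a single good tree" into the stated quantifier structure: the theorem only asks for \emph{existence} of one species tree per $(n,f)$ for which \eqref{eq:main-top} holds for \emph{all} pairs $\ell,\ell'$, so I must also check the bound for pairs at small graph distance — but there the right-hand side $C\, d_\mathrm{g}^{\ustree}(\ell,\ell')/m$ is small, and even a single branch on the $\ell$–$\ell'$ path contributes a fluctuation of order $c(f)/m$, so the inequality holds with room to spare after possibly shrinking $C$. Assembling these pieces — caterpillar construction, per-branch nondegeneracy uniform in $f$, conditional independence across a linear-size set of branches, and i.i.d. averaging over genes — gives the theorem.
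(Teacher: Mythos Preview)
Your overall strategy --- caterpillar construction, reduction to a single gene via i.i.d.\ averaging, decomposition of $d_\mathrm{g}^{\tree_1}(\ell,\ell')$ into local increments along the species-tree path, and a variance-of-sum argument --- matches the paper's. The gap is in the specific construction and, consequently, in the independence step you correctly flag as the heart of the matter. With \emph{all} internal branches of length $f$, for small $f$ almost no coalescence occurs on any individual branch, so the number of lineages entering successive branches keeps growing and the increment on branch $k$ depends strongly on what happened on branches $1,\ldots,k-1$. Neither of your proposed fixes resolves this: conditioning on the full lineage-count profile $(I_j^e,O_j^e)_e$ essentially freezes the topology-relevant randomness, killing the conditional variance you need; and taking ``well-separated'' branches does not help when separation is measured in units of $f$ with $f$ tiny, since lineages still fail to coalesce between them and the blocks remain coupled.

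The paper's key device, which you are missing, is precisely \emph{not} to make every branch short. The theorem only requires the \emph{shortest} branch to have length $f$; the paper exploits this freedom by alternating short backbone branches of length $f$ with long backbone branches of length $g=4\log n$. The long branches force full coalescence within each two-branch ``block'' with probability at least $1-1/n$, and conditioning on the high-probability event $\fcoal$ that this happens everywhere makes the per-block contributions $X_i$ genuinely i.i.d.\ $\{1,2\}$-valued with variance $\tfrac{1}{3}e^{-f}(1-\tfrac{1}{3}e^{-f})$. In particular your worry about the constant degenerating as $f\to 0$ then evaporates: the per-block variance tends to $2/9$. So the fix to your outline is simply to insert these long ``reset'' branches; once you do, the independence argument you wanted becomes a one-line conditioning rather than the delicate lineage-profile analysis you were anticipating.
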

\noindent In words, there are species trees for which the variance of internode distance scales as the graph distance---which can be of order $n$---divided by $m$. 
The proof of Theorem~\ref{thm:1} is detailed in
Section~\ref{sec:bad}.

\section{Discussion}
\label{sec:discussion}

How is Theorem~\ref{thm:1}
related to the sample complexity of species tree estimation
methods?
The natural approach for deriving bounds on the number of genes
required for high-probability reconstruction in distance-based 
methods is to show that the estimated distances used are sufficiently
concentrated around their expectations---provided that $m$ is large enough as a function of
$n$ and $f$ (e.g.~\cite{ErStSzWa:99a,Atteson:99}; but see~\cite{Roch:10} for a more refined analysis). 
In particular, one needs to \emph{control the variance} of distance estimates.

\paragraph{Practical implications}
 Bound~\eqref{eq:main-max} in Theorem~\ref{thm:1} implies that
to make \emph{all} variances negligible the number of genes
$m$ is required to scale at least linearly 
in the number of species $n$. In contrast,
certain quartet-based methods such as ASTRAL~\cite{astral,astral-2} have a sample complexity 
scaling only logarithmically in $n$~\cite{ShRoMi:18}.

On the other hand, Bound~\eqref{eq:main-max} is only relevant
for those reconstruction algorithms using \emph{all} distances,  for instance NJst which is based on Neighbor-Joining~\cite{Atteson:99,LaceyChang:06}. 
Many so-called fast-converging reconstruction methods purposely use only a \emph{strict subset} of all distances,
specifically those distances within a constant factor of
the ``depth'' of the species tree. Refer to~\cite{ErStSzWa:99a} 
for a formal definition of the depth, but for our purposes it will suffice
to note that in the case of graph distance the
depth is at most of the order of $\log n$. Hence Bound~\eqref{eq:main-top} suggests 
it may still possible to achieve a sample complexity comparable
to that of ASTRAL---if one uses a fast-converging method (within ASTRID for instance). 

\paragraph{The impact of correlation}  
Theorem~\ref{thm:1} does not in fact lead to a bound on the
sample complexity of internode distance-based reconstruction
methods. For one, Theorem~\ref{thm:1} only gives a lower bound
	on the variance. One may be able to construct examples
	where the variance is even larger. In general, analyzing the
	behavior of internode distance is quite challenging because
	it depends on the \emph{full} multispecies coalescent process
	in a rather tangled manner.
	
	Perhaps more importantly, 
	the variance itself is not enough
	to obtain tight bounds on the sample complexity. One problem
	is \emph{correlation}. Because $\inode{m}(x,y)$ and $\inode{m}(w,z)$
	are obtained using the same gene trees, they are highly
	correlated random variables. One should expect this correlation to produce
	cancellations (e.g.~in the four-point condition; see~\cite{Steel:16}) that could drastically lower the sample complexity. The importance
	of this effect remains to be studied.

\paragraph{Gene tree estimation error}
We pointed out above that quartet-based methods such as ASTRAL may be less sensitive to long distances than 
internode distance-based methods such as NJst. 
An important caveat is the assumption that gene trees
are \emph{perfectly reconstructed}. In reality, gene tree estimation
errors are likely common and are also affected
by long distances (see e.g.~\cite{ErStSzWa:99a}).
A more satisfactory approach would account for these
errors or would consider simultaneously sequence-length
and gene-number requirements. Few such analyses
have so far been performed because of technical challenges~\cite{RochWarnow2015,DaNoRo:15,MosselRoch:15,DaMoNoRo:17}.

\section{Variance of internode distance}
\label{sec:bad}

In this section, we prove Theorem~\ref{thm:1}.
Our analysis of internode distance is based on the
construction of a special species tree where its variance
is easier to control. We begin with a high-level proof
sketch:
\begin{itemize}
	\item Our special example
	is a caterpillar tree with an alternation of \emph{short}
	and \emph{long} branches along the backbone.
	
	\item The \emph{short}
	branches produce ``local uncertainty'' in the number
	of lineages that coalesce onto the path between two fixed leaves.
	The \emph{long} branches make these contributions to the internode distance ``roughly
	independent'' along the backbone.
	
	\item As a result, the internode distance is, up to a small
	error, a sum of independent and identically distributed
	contributions. Hence, its variance grows linearly with graph distance.
	
\end{itemize}

\paragraph{Setting for analysis}
We fix the number of species $n$ and we assume
for convenience that $n$ is even.\footnote{A straightforward modification of the argument also works for odd $n$.}
Recall also that $f$ will denote the length of the shortest
branch in coalescent time units. We consider the
species tree $(\stree, \Gamma)$ depicted in Figure~\ref{fig:caterpillar}.
\begin{figure}
	\centering
	\includegraphics[width = 0.8\textwidth]{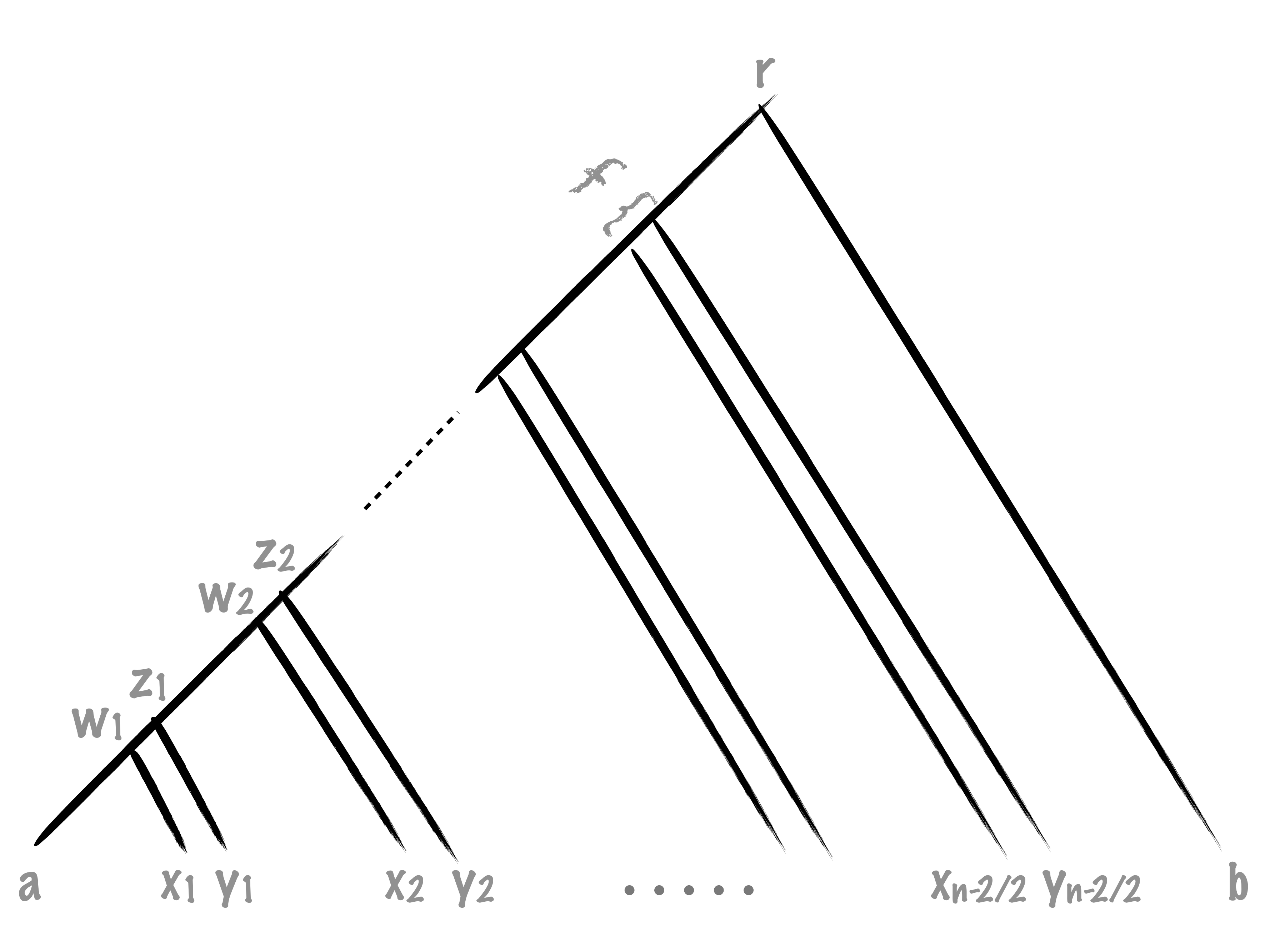}
	\caption{The species tree used in the analysis.}\label{fig:caterpillar}
\end{figure}
Specifically, $\stree$ is a caterpillar tree: its \emph{backbone}
is an $n-1$-edge path 
$$(a,w_1), (w_1,z_1), (z_1,w_2), (w_2,z_2), \ldots, (w_{\frac{n-2}{2}},z_{\frac{n-2}{2}}),(z_{\frac{n-2}{2}},r)$$
connecting leaf $a$ to root $r = w_{n/2}$; each vertex $w_i$
on the backbone is incident with an edge $(w_i,x_i)$
to leaf $x_i$; each vertex $z_i$
on the backbone is incident with an edge $(z_i,y_i)$
to leaf $y_i$; root $r$
is incident with an edge $(r,b)$
to leaf $b$. Each edge of the form $e = (w_i,z_i)$ is
a \emph{short} edge of length $\Gamma_e = f$, while
all other edges are \emph{long} edges of length $g = 4 \log n$.

\paragraph{Proof of Theorem~\ref{thm:1}}
%For lack of space, the proof is given in Appendix~\ref{sec:proofs}.
Recall that our goal is to prove that for all pairs of species $\ell,\ell'$ and all
integers $m \geq 1$, if $\{\tree_j\}_j \sim \mathcal{D}_{\mathrm{s}}^{m}[\stree,\Gamma]$
then
\begin{align*}\label{eq:main-top-repeated}
\var\left[\inode{m}(\ell,\ell')\right] \geq C \frac{d_\mathrm{g}^{\ustree}(\ell,\ell')}{m}.
\end{align*}
To simplify the analysis, we detail the argument
in the case $\ell = a$ and $\ell' = b$ only. The other cases
follow similarly. 

We first reduce the computation to a single gene.
Recall that
\begin{align*}
\inode{m}(a,b)
=
\frac{1}{m} \sum_{j=1}^m d^{\tree_j}_\mathrm{g}(a,b).
\end{align*}
\begin{lemma}[Reduction to a single gene]\label{claim:single-gene}
	For any $m$, it holds that
	\begin{align*}
	\var\left[\inode{m}(\ell,\ell')\right] 
	=
	\frac{1}{m}
	\var\left[d^{\tree_1}_\mathrm{g}(a,b)\right].
	\end{align*}
\end{lemma}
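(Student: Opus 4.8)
The plan is to exploit the fact that, by the definition of $\mathcal{D}_{\mathrm{s}}^{m}[\stree,\Gamma]$, the gene trees $\tree_1,\ldots,\tree_m$ are mutually independent and identically distributed. Setting $D_j := d^{\tree_j}_\mathrm{g}(a,b)$, each $D_j$ is a deterministic function of $\tree_j$ alone, so the random variables $D_1,\ldots,D_m$ are themselves i.i.d.\ with the common law of $D_1$. They are also bounded --- a gene tree on $n$ leaves has finitely many edges --- so $\var[D_1] < \infty$ and all the manipulations below are legitimate.

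I would then write $\inode{m}(a,b) = \frac{1}{m}\sum_{j=1}^m D_j$ and expand the variance by bilinearity of covariance:
\begin{align*}
\var\!\left[\inode{m}(a,b)\right]
&= \frac{1}{m^2}\,\var\!\left[\sum_{j=1}^m D_j\right] \\
&= \frac{1}{m^2}\left(\sum_{j=1}^m \var[D_j] + \sum_{\substack{1\le j,k\le m\\ j\neq k}} \mathrm{Cov}(D_j,D_k)\right).
\end{align*}
Independence forces $\mathrm{Cov}(D_j,D_k)=0$ for $j\neq k$, while identical distribution gives $\var[D_j]=\var[D_1]$ for all $j$; hence the right-hand side collapses to $\frac{1}{m^2}\cdot m\,\var[D_1] = \frac{1}{m}\var\!\left[d^{\tree_1}_\mathrm{g}(a,b)\right]$, which is the claimed identity. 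Exactly the same computation with $a,b$ replaced by an arbitrary pair $\ell,\ell'$ handles the general statement.

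I do not expect any real obstacle here: Lemma~\ref{claim:single-gene} is nothing more than the textbook formula for the variance of a sample mean of i.i.d.\ summands, and its purpose is purely organizational. It reduces the proof of Theorem~\ref{thm:1} to establishing a linear-in-$d_\mathrm{g}^{\ustree}$ lower bound on the \emph{single-gene} variance $\var[d^{\tree_1}_\mathrm{g}(a,b)]$, which is where the real work --- analyzing how the short and long branches of the caterpillar species tree shape the distribution of $d^{\tree_1}_\mathrm{g}(a,b)$ --- will take place.
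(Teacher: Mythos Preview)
Your proof is correct and follows essentially the same approach as the paper: both invoke the i.i.d.\ structure of the gene trees to reduce the variance of the sample mean to $\frac{1}{m}\var[d^{\tree_1}_\mathrm{g}(a,b)]$. Your version is slightly more explicit (noting boundedness and writing out the covariance terms), but the argument is the same textbook computation.
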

\begin{proof}
	Because the $\tree_j$'s are independent and identically
	distributed, it follows that
	\begin{align*}
	\var\left[\inode{m}(\ell,\ell')\right] 
	&= \var\left[\frac{1}{m} \sum_{j=1}^m d^{\tree_j}_\mathrm{g}(a,b)\right]
	=  \frac{1}{m^2}  \sum_{j=1}^m  \var\left[ d^{\tree_j}_\mathrm{g}(a,b)\right]
	=
	\frac{1}{m}
	\var\left[d^{\tree_1}_\mathrm{g}(a,b)\right],
	\end{align*}
	as claimed.
\end{proof}

We refer to the $2$-edge path $\{(w_i,z_i), (z_i,w_{i+1})\}$ as the \emph{$i$-th block}.
The purpose of the long backbone
edges is to create independence between the contributions
of the blocks. To make that explicit, let $\fcoal_i$ be
the event that, in $\tree_1$, all lineages entering the
edge $(z_i,w_{i+1})$ have coalesced by the end of the
edge (backwards in time). And let $\fcoal = \cap_i \fcoal_i$.
\begin{lemma}[Full coalescence on all blocks]\label{claim:full-coal}
	It holds that
	\begin{align*}
	\prob[\fcoal] \geq 1 - 1/n.
	\end{align*}
\end{lemma}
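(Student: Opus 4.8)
The plan is to prove $\prob[\fcoal^c] \le 1/n$ by a union bound over the $n/2 - 1 < n/2$ blocks, reducing the whole statement to a single fact about Kingman's coalescent: within any one branch of the species tree, the coalescence time of a \emph{fixed} pair of ancestral lineages is $\mathrm{Exp}(1)$, regardless of how many other lineages share the branch. I would write $\fcoal^c = \bigcup_i \fcoal_i^c$, so that $\prob[\fcoal^c] \le \sum_i \prob[\fcoal_i^c]$, and then bound each term separately; by symmetry it suffices to handle one long backbone edge $e = (z_i, w_{i+1})$, which has length $g = 4 \log n$.

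Fix the block $i$. Since $\stree$ has only $n$ leaves, at most $n$ lineages of $\tree_1$ enter $e$ at $z_i$; call them $L_1, \dots, L_K$ with $K \le n$. First I would observe that these lineages have all coalesced into one by the top endpoint $w_{i+1}$ if and only if every pair $L_p, L_q$ has coalesced within $e$: if two distinct ancestral lineages survived to $w_{i+1}$, any $L_p$ in one and $L_q$ in the other would still be uncoalesced. Hence $\fcoal_i^c$ is exactly the event that some pair $L_p, L_q$ fails to coalesce within $e$. Next I would argue that, for a fixed pair, the time $\tau_{pq}$ (measured from $z_i$) at which the ancestral lineages of $L_p$ and $L_q$ merge inside $e$ is $\mathrm{Exp}(1)$, irrespective of $K$: while these two ancestral lineages remain distinct they are two specific lineages among the current set, so they merge with \emph{each other} at rate exactly $1$, and mergers involving other lineages never cause $L_p$ and $L_q$ to coalesce; thus $\tau_{pq}$ has constant hazard $1$. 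A union bound over the at most $\binom{n}{2} \le n^2/2$ pairs then gives
\[
\prob[\fcoal_i^c] \le \binom{n}{2}\, e^{-g} \le \frac{n^2}{2}\, n^{-4} = \frac{1}{2 n^2},
\]
using $g = 4 \log n$ with $\log$ the natural logarithm (the bound only improves if instead $\log = \log_2$).

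Summing over the fewer than $n/2$ blocks would then yield $\prob[\fcoal^c] \le \frac{n}{2}\cdot \frac{1}{2 n^2} = \frac{1}{4n} \le \frac1n$, as claimed. The step I expect to require the most care is the pairwise claim $\tau_{pq} \sim \mathrm{Exp}(1)$ ``no matter how many lineages are present'': the cleanest route is to invoke the sampling consistency (or the Markov/exchangeability structure) of Kingman's coalescent rather than tracking by hand which ancestral lineages have merged. Everything else --- the two union bounds, the count $K \le n$, and the reduction of full coalescence to pairwise coalescence --- is routine. If one prefers to avoid the pairwise argument altogether, an alternative is to condition on $\fcoal_1 \cap \cdots \cap \fcoal_{i-1}$: on that event exactly one backbone lineage reaches $w_i$, and since each pendant edge carries a single leaf at most $3$ lineages enter $e$, so that $\prob[\fcoal_i^c \mid \fcoal_1 \cap \cdots \cap \fcoal_{i-1}] \le \prob[T_2 + T_3 > g]$ with $T_2 \sim \mathrm{Exp}(1)$ and $T_3 \sim \mathrm{Exp}(3)$ independent, which equals $\frac32 e^{-g} - \frac12 e^{-3g} \le \frac32 n^{-4}$; summing the disjoint events indexed by the smallest failing block again gives a total below $1/n$.
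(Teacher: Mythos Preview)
Your proposal is correct, and in fact you outline two valid routes.

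Your main route differs from the paper's. The paper does \emph{not} allow up to $n$ lineages into the long edge: it conditions sequentially, writing $\prob[\fcoal] = \prod_i \prob[\fcoal_i \mid \fcoal_1 \cap \cdots \cap \fcoal_{i-1}]$, and observes that on $\fcoal_1 \cap \cdots \cap \fcoal_{i-1}$ at most $3$ lineages enter $(z_i,w_{i+1})$; it then bounds the $3$-lineage failure crudely by $e^{-3g/2}+e^{-g/2} \le 2e^{-g/2}$. Your alternative at the end is exactly this argument, with the slightly sharper exact expression $\tfrac32 e^{-g}-\tfrac12 e^{-3g}$ for $\prob[T_2+T_3>g]$. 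Your primary route instead avoids the conditioning entirely: you accept the worst case $K\le n$, invoke sampling consistency of Kingman's coalescent to get $\tau_{pq}\sim\mathrm{Exp}(1)$ for each pair, and union-bound over $\binom{n}{2}$ pairs. This trades a much looser per-block bound ($\tfrac12 n^{-2}$ versus the paper's $2n^{-2}$, both from $g=4\log n$) for an argument that does not rely on the caterpillar structure below block $i$; since $g$ was chosen generously, either bound clears $1/n$ with room to spare. One phrasing to tighten: ``mergers involving other lineages never cause $L_p$ and $L_q$ to coalesce'' is correct as you intend it (only the merger of the two current ancestral lineages of $L_p$ and $L_q$ realizes $\tau_{pq}$, and that pair always merges at rate $1$), but a reader might momentarily misparse it---citing sampling consistency directly, as you suggest, is the cleanest fix.
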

\begin{proof}
	By the multiplication rule and the fact that $\fcoal_i$ only
	depends on the number of lineages entering $(w_i,z_i)$,
	we have
	\begin{align*}
	\prob[\fcoal] 
	&= \prod_i \prob[\fcoal_i\,|\,\fcoal_{1} \cap \cdots \cap \fcoal_{i-1}]
	= \left(\prob[\fcoal_1]\right)^{n/2-1}
	\geq 1 - (n/2-1) \left(1- \prob[\fcoal_1]\right).
	\end{align*}
	It remains to upper bound $\prob[\fcoal_1^c]$.
	We have either $2$ or $3$ lineages entering $(z_1,w_2)$. 
	In the former case, the failure to coalesce has probability
	$e^{-g}$, i.e.~the probability that an exponential with rate $1$ is greater than $g$. In the latter case, the failure to fully coalesce
	has probability at most $e^{-3(g/2)} + e^{-g/2}$, i.e.~the probability that either the first coalescence (happening at rate $3$) or the second one (happening at rate $1$) takes more than $g/2$. Either way this gives at most $\prob[\fcoal_1^c] \leq 2 e^{-g/2}$. With $g = 4 \log n = 2 \log n^2$ above, we get the claim.
\end{proof}

We now control the contribution from each block. Let
$X_i$ be the number of lineages coalescing into the
path between $a$ and $b$ \emph{on the $i$-th block}.
Conditioning on $\fcoal$, we have $X_i \in \{1,2\}$ 
and we have further that all $X_i$'s are independent and identically distributed. This leads to the following
bound.
\begin{lemma}[Linear variance]\label{claim:linear-var}
	It holds that
	\begin{align*}
	\var\left[d^{\tree_1}_\mathrm{g}(a,b)\right]
	\geq \frac{n-2}{2} \,\var\left[X_1\middle| \fcoal_1\right]\,\prob[\fcoal].
	\end{align*}
\end{lemma}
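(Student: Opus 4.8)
The plan is to condition on the full‑coalescence event $\fcoal$, to express $d^{\tree_1}_\mathrm{g}(a,b)$ on that event as a deterministic constant plus the sum of the block contributions $X_1,\dots,X_{(n-2)/2}$, and then to invoke the law of total variance. The key structural observation is that \emph{every} coalescence in the realized gene tree occurs either inside one of the blocks (along the backbone) or in the ancestral population above the root: the edge $(a,w_1)$ and each pendant edge $(w_i,x_i)$, $(z_i,y_i)$, $(r,b)$ carries a single lineage and hence no coalescence. Since in a binary tree the number of edges on the path between two leaves is one more than the number of internal vertices on that path, and each such internal vertex records a coalescence merging the lineage currently ancestral to $a$ (but not $b$) with another lineage, we get $d^{\tree_1}_\mathrm{g}(a,b) = 1 + A + \sum_{i} X_i$, where $A$ is the number of on‑path coalescences occurring above the root and $X_i$ is, by definition, the number occurring inside block $i$. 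On $\fcoal$ exactly one lineage (ancestral to everyone but $b$) reaches the root, so it coalesces with $b$'s lineage exactly once above the root; thus $A=1$ on $\fcoal$ and
\[
d^{\tree_1}_\mathrm{g}(a,b) \;=\; 2 + \sum_{i=1}^{(n-2)/2} X_i \qquad \text{on } \fcoal .
\]

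Next I would use that, conditioned on $\fcoal$, the $X_i$ are i.i.d.\ (asserted in the paragraph preceding the lemma) and that each has the law of $X_1\mid\fcoal_1$. For the latter: on $\fcoal$, block $i$ receives exactly two lineages on its short edge---the single incoming backbone lineage, which exists by $\fcoal_1,\dots,\fcoal_{i-1}$ when $i\ge 2$ and trivially when $i=1$, together with $x_i$---exactly as block $1$ does, and, conditioned on $\fcoal_i$, the internal coalescences of block $i$ then have the same law as those of block $1$ conditioned on $\fcoal_1$; given all this, the configurations of the other blocks are independent of block $i$'s internal configuration, so conditioning on all of $\fcoal$ rather than on $\fcoal_i$ alone does not change the law of $X_i$. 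Combining with the display above,
\[
\var\big[d^{\tree_1}_\mathrm{g}(a,b)\,\big|\,\fcoal\big] \;=\; \var\Big[\textstyle\sum_i X_i \,\Big|\, \fcoal\Big] \;=\; \frac{n-2}{2}\,\var[X_1\mid\fcoal_1].
\]

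Finally I would apply the law of total variance with respect to the indicator $\mathbbm{1}_{\fcoal}$, discarding the nonnegative term $\var\big[\E[d^{\tree_1}_\mathrm{g}(a,b)\mid\mathbbm{1}_{\fcoal}]\big]$ and the nonnegative $\fcoal^c$‑contribution to $\E\big[\var[d^{\tree_1}_\mathrm{g}(a,b)\mid\mathbbm{1}_{\fcoal}]\big]$:
\begin{align*}
\var\big[d^{\tree_1}_\mathrm{g}(a,b)\big]
&\;\geq\; \E\big[\var[d^{\tree_1}_\mathrm{g}(a,b)\mid\mathbbm{1}_{\fcoal}]\big]
\;\geq\; \var\big[d^{\tree_1}_\mathrm{g}(a,b)\,\big|\,\fcoal\big]\,\prob[\fcoal] \\
&\;=\; \frac{n-2}{2}\,\var[X_1\mid\fcoal_1]\,\prob[\fcoal],
\end{align*}
which is exactly the claim. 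The one genuinely delicate step is the structural decomposition of the first paragraph: establishing carefully, and on all of $\fcoal$, that $d^{\tree_1}_\mathrm{g}(a,b)$ is an affine function of the i.i.d.\ summands $X_i$---one must pin down which coalescences land on the $a$--$b$ path and check that conditioning on $\fcoal$ genuinely decouples the blocks. Everything afterwards is bookkeeping and a one‑line application of the law of total variance.
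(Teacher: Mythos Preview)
Your proof is correct and follows essentially the same route as the paper's: apply the law of total variance with respect to $\mathbbm{1}_\fcoal$, drop the nonnegative terms, and use that on $\fcoal$ the distance decomposes as a constant plus a sum of i.i.d.\ block contributions each with the law of $X_1\mid\fcoal_1$. You are in fact more careful than the paper, which writes $d^{\tree_1}_\mathrm{g}(a,b)=\sum_i X_i$ on $\fcoal$ without the additive constant (harmless for the variance) and simply asserts the i.i.d.\ structure and the reduction from $\fcoal$ to $\fcoal_1$ that you spell out.
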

\begin{proof}
	By the conditional variance formula, letting $\mathbbm{1}_\fcoal$ be the indicator of $\fcoal$,
	\begin{align*}
	\var\left[d^{\tree_1}_\mathrm{g}(a,b)\right]
	&\geq \expec\left[
	\var\left[d^{\tree_1}_\mathrm{g}(a,b)\middle| \mathbbm{1}_\fcoal\right]
	\right]
	\geq \var\left[d^{\tree_1}_\mathrm{g}(a,b)\middle| \fcoal\right]
	\prob[\fcoal].
	\end{align*}
On the event $\fcoal$, it holds that
	\begin{align*}
	d^{\tree_1}_\mathrm{g}(a,b)
	&= \sum_{i} X_i.
	\end{align*}
	Moreover, conditioning on $\fcoal$ makes the $X_i$'s independent 
	and identically distributed. Hence we have finally
	\begin{align*}
	\var\left[d^{\tree_1}_\mathrm{g}(a,b)\right]
	&\geq 
	\frac{n-2}{2} \,\var\left[X_1 \middle| \fcoal\right]
	\,\prob[\fcoal]
	\geq \frac{n-2}{2} \,\var\left[X_1 \middle| \fcoal_1\right]
	\,\prob[\fcoal],
	\end{align*}
	where we used the fact that $X_1$
	depends on $\fcoal$ only through $\fcoal_1$.
\end{proof}

The final step is to bound the contribution to the variance
from a single block.
\begin{lemma}[Contribution from a block]\label{claim:single-block}
	It holds that
	\begin{align*}
	\var\left[X_1\middle| \fcoal_1\right]
	= \frac{1}{3}e^{-f}\left(1 - \frac{1}{3}e^{-f}\right) = \frac{2}{9}\left(1 - \Theta(f)\right),
	\end{align*}
	for $f$ small, where we used the standard Big-Theta notation. %.
\end{lemma}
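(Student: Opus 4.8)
The plan is to reduce everything to the computation of a single probability. Conditionally on $\fcoal_1$, the block-$1$ contribution $X_1$ takes only the values $1$ and $2$ (as observed just before Lemma~\ref{claim:linear-var}), so under $\prob[\,\cdot \mid \fcoal_1]$ the variable $X_1 - 1$ is Bernoulli, and therefore $\var[X_1 \mid \fcoal_1] = p(1-p)$ where $p := \prob[X_1 = 1 \mid \fcoal_1]$. The whole argument then comes down to showing $p = \tfrac13 e^{-f}$.

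To get $p$, I would follow the coalescent along the first block $\{(w_1,z_1),(z_1,w_2)\}$. The edges $(a,w_1)$ and $(w_1,x_1)$ are long and each carries a single lineage on which nothing can coalesce, so exactly two lineages --- the ancestral lineage of $a$ and the pendant lineage of $x_1$ --- enter the short edge $(w_1,z_1)$; on that edge (of length $f$) they coalesce with probability $1 - e^{-f}$, an event I call $C$. At $z_1$ the pendant lineage of $y_1$ joins, so the long edge $(z_1,w_2)$ is entered by two lineages on $C$ and by three lineages on $C^c$. On $C$, conditioning on $\fcoal_1$ forces those two lineages to coalesce, so $y_1$ merges onto the $a$-$b$ path, and together with $x_1$ (which merged onto it on the short edge) this gives $X_1 = 2$. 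On $C^c$, the three lineages --- the $a$-ancestor, $x_1$, $y_1$ --- undergo two coalescences on $\fcoal_1$; by exchangeability the first of these pairs $\{x_1,y_1\}$ with probability exactly $1/3$, independently of all the coalescence times (the standard property that the identity of the first coalescing pair is independent of the coalescence times), in which case the merged $\{x_1,y_1\}$-lineage is the only lineage that joins the $a$-$b$ path and $X_1 = 1$, while otherwise one of $x_1,y_1$ joins first and then the other, giving $X_1 = 2$. Hence $\{X_1 = 1\}$ is precisely the event that $C^c$ occurs and that the first coalescence on $(z_1,w_2)$ pairs $x_1$ with $y_1$; as the latter event is independent of the coalescence times, and hence of both $C$ and $\fcoal_1$, we obtain $p = \tfrac13\,\prob[C^c \mid \fcoal_1]$. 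Finally $\prob[C^c \mid \fcoal_1] = e^{-f}$ up to a correction of order $e^{-g} = n^{-4}$ arising from the mild dependence of $\fcoal_1$ on whether two or three lineages sit on the long edge (bounded exactly as in the proof of Lemma~\ref{claim:full-coal}), a correction rendered negligible by the choice $g = 4\log n$; this gives $p = \tfrac13 e^{-f}$.

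Putting this together, $\var[X_1 \mid \fcoal_1] = p(1-p) = \tfrac13 e^{-f}\bigl(1 - \tfrac13 e^{-f}\bigr)$, and substituting $e^{-f} = 1 - f + O(f^2)$ yields $p(1-p) = \tfrac29 - \tfrac19 f + O(f^2) = \tfrac29\bigl(1 - \Theta(f)\bigr)$ as $f \to 0$, which is the asserted form. There is no deep obstacle here; the only points requiring care are the bookkeeping of which lineages occupy each of the two edges of the block --- which rests on long edges being inert while they carry a single lineage --- and the verification that the factor $1/3$ is exact and unaffected by the conditioning on $\fcoal_1$. Everything else is an elementary computation with exponential random variables.
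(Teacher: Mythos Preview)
Your proof is correct and follows essentially the same route as the paper: reduce to a Bernoulli via $X_1 - 1 \in \{0,1\}$, identify the unique scenario producing $X_1 = 1$ (no coalescence on the short edge, then $x_1$ and $y_1$ pair first on the long edge), and compute its probability as $\tfrac{1}{3}e^{-f}$ using exchangeability. You are in fact slightly more careful than the paper on one point: you flag that $\prob[C^c \mid \fcoal_1]$ equals $e^{-f}$ only up to an $O(e^{-g}) = O(n^{-4})$ correction, whereas the paper asserts the exact value $\tfrac{1}{3}e^{-f}$ ``by symmetry when conditioning on $\fcoal_1$'' without isolating this dependence --- a harmless discrepancy for the theorem, since only a constant lower bound is needed.
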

\begin{proof}
	As we pointed out earlier, conditioning on $\fcoal_1$,
	we have $X_1 \in \{1,2\}$. In particular $X_1 - 1$ is a Bernoulli 
	random variable whose variance $\prob\left[X_1 - 1 = 1\middle| \fcoal_1\right](1- \prob\left[X_1 - 1 = 1\middle| \fcoal_1\right])$ is the same as the variance of $X_1$ itself. So we need to compute the probability that $X_1 = 2$, conditioned on $\fcoal_1$. There are four scenarios to consider (depending on whether or not there is coalescence in the short branch $(w_1,z_1)$ and which
	coalescence occurs first in the long branch $(z_1,w_2)$), only one of which produces $X_1 = 1$:
	\begin{itemize}
		\item No coalescence occurs in $(w_1,z_1)$ and the first coalescence in $(z_1,w_2)$ is between the lineages coming from $x_1$ and $y_1$. This event has probability $\frac{1}{3}e^{-f}$ by symmetry when conditioning on $\fcoal_1$.
	\end{itemize}
	Hence $\prob\left[X_1 = 2\middle| \fcoal_1\right] = 1 - \frac{1}{3}e^{-f}$.
\end{proof}

By combining Lemmas~\ref{claim:single-gene}, \ref{claim:full-coal}, \ref{claim:linear-var} and \ref{claim:single-block}, we get that
\begin{align*}
\var\left[\inode{m}(\ell,\ell')\right] 
\geq
\frac{1}{m} \times \frac{n-2}{2} \times \frac{1}{3}e^{-f}\left(1 - \frac{1}{3}e^{-f}\right)  \times \left(1 - \frac{1}{n}\right).
\end{align*}
Choosing $C$ small enough concludes the proof of the theorem.

\section{Conclusion}

To summarize, we have derived a new lower bound
on the worst-case variance of internode distance under the multispecies
coalescent. No such bounds were previously known as
far as we know.
Our results suggest it may be preferable to use
fast-converging methods when working with internode distances for species tree estimation. 
The problem of providing
tight upper bounds on the sample complexity
of internode distance-based methods remains however
an important open question.

%\section*{Acknowledgments}

\section*{Acknowledgments}

This work was supported by funding from the U.S. National Science Foundation  
DMS-1149312 (CAREER), DMS-1614242 and CCF-1740707 (TRIPODS).
We thank Tandy Warnow for suggesting the problem and for helpful discussions.

\bibliographystyle{plain}
\bibliography{nute-roch-warnow}

%\appendix
%
%\section{Proof of main theorem}
%\label{sec:proofs}
%

\end{document}